\documentclass[12pt, draftclsnofoot, onecolumn]{IEEEtran}
%


%

%
\ifCLASSINFOpdf
\else
\fi
\hyphenation{op-tical net-works semi-conduc-tor}
\usepackage{xcolor}
\usepackage{amsmath}
\usepackage{amsfonts}
\usepackage{amssymb}
\usepackage{amsthm}
\usepackage{graphicx}
\usepackage{lipsum}
\usepackage{cite}
\newenvironment{claim}[1]{\par\noindent\it{\textbf{Proposition~1:}}\space#1}{}
\newenvironment{lemma}[1]{\par\noindent\it{\textbf{Lemma~1:}}\space#1}{}

\newcommand{\G}{{\bf G}}
\newcommand{\T}{{\bf T}}
\newcommand{\R}{{\bf R}}

\begin{document}
%
\title{Power Adaptation for Vector Parameter Estimation according to Fisher Information based Optimality Criteria}

\author{Do\u{g}a~G\"{u}rg\"{u}no\u{g}lu,
        Berkan~Dulek,
        and Sinan~Gezici\thanks{D. G\"{u}rg\"{u}no\u{g}lu is with the Division of Decision and Control Systems, Electrical Engineering and Computer Science, KTH Royal Institute of Technology, 114 28 Stockholm, Sweden (e-mail: dogag@kth.se). B. Dulek is with the Department of Electrical and Electronics Engineering, Hacettepe University, Beytepe Campus, Ankara 06800, Turkey, (e-mail: berkan@ee.hacettepe.edu.tr). S. Gezici is with the Department of Electrical and Electronics Engineering, Bilkent University, Ankara 06800, Turkey (e-mail: gezici@ee.bilkent.edu.tr).}
}

\maketitle

\begin{abstract}
The optimal power adaptation problem is investigated for vector parameter estimation according to various Fisher information based optimality criteria. By considering an observation model that involves a linear transformation of the parameter vector and an additive noise component with an arbitrary probability distribution, six different optimal power allocation problems are formulated based on Fisher information based objective functions. Via optimization theoretic approaches, various closed-form solutions are derived for the proposed problems. Also, the results are extended to cases in which nuisance parameters exist in the system model or certain types of nonlinear transformations are applied on the parameter vector. Numerical examples are presented to investigate performance of the proposed power allocation strategies.
\end{abstract}

\begin{IEEEkeywords}
Cram\'{e}r-Rao lower bound, estimation, Fisher information, power adaptation.
\end{IEEEkeywords}

%
\IEEEpeerreviewmaketitle

\newpage

\section{Introduction}
%
%
%
%

In vector parameter estimation, the aim is to design an optimal estimator for a number of unknown parameters based on a set of observations.
The design of an optimal estimator commonly involves the calculation of posterior distributions or likelihood functions based on the statistical relation between the observation and the parameter vector.
If the prior distribution of parameters is known, the Bayesian approach can be adopted and estimators such as the minimum mean squared error (MMSE) estimator, the minimum mean absolute error (MMAE) estimator, or the maximum \emph{a posteriori} probability (MAP) estimator can be derived based on the posterior distribution, i.e., the probability distribution of the parameter vector given the observation \cite{kayestimation}.
On the other hand, in the absence of prior information, parameters can be modeled as a deterministic unknown vector and estimators such as
the minimum variance unbiased estimator (MVUE), the maximum likelihood (ML) estimator, or the best linear unbiased estimator (BLUE) can be employed for vector parameter estimation \cite{poor}.

Performance of the aforementioned estimators depends on system parameters such as noise variance and transformations acting on the parameter vector, and it is usually challenging to find exact and closed-form expressions for estimation errors of the corresponding estimators. Therefore, in order to assess estimation performance, various theoretical bounds such as the Cram\'{e}r-Rao lower bound (CRLB), Ziv-Zakai lower bound (ZZLB),
and Barankin-type bounds are used as gold standards \cite{Nehorai2013}.
Such bounds are mainly determined by the statistics of the observation, which depends on system parameters. This means that for a given system model, estimation performance can be improved only to a certain extent by using an optimal estimator.
In order to realize further improvements in estimation performance,
the effects of the system on the parameter vector should be adapted.
One common way of achieving such an improvement is to perform
power adaptation, i.e., transmitting different components of the parameter vector with different power levels \cite{resourceAllocDigit}. Since the total available power is usually limited \cite{distEstEnConstr}, the problem of power adaptation arises as a constrained optimization problem.
In this manuscript, the aim is to develop optimal power allocation strategies for vector parameter estimation in the absence of prior information by using Fisher information based optimality criteria 
\cite[Section IV.E.1]{poor}, \cite{wcevar_cite}, \cite[Section 9.2.1]{kalaba}, .

Power adaptation and in general resource allocation have been considered for various estimation problems in the literature. For example, in wireless sensor networks (WSNs), the problem of optimal resource allocation for vector parameter estimation with respect to various performance metrics is the main focus in many studies.
In \cite{distVecPowBWConstr}, the optimal transmit power allocation and quantization rate allocation schemes are investigated to minimize the average mean squared error (MSE). In \cite{distBlueFeedbackCSI}, the optimal power allocation strategy that minimizes the $\ell_2$-norm of the transmit power vector is derived under a maximum variance constraint for the best linear unbiased estimator. In addition, the optimal codebook is computed via the Lloyd algorithm when the channel state information (CSI) is limited, which is usually the case for large WSNs. In \cite{resourceAllocDigit}, estimation of an unknown Gaussian random vector with known mean vector and covariance matrix is considered in a WSN setting, where the fusion center uses the linear MMSE (LMMSE) estimator to estimate the parameter vector based on sensor observations, which are fading channel impaired and noise corrupted versions of the transmitted parameter vector. An upper bound on the MSE is minimized by first computing the optimal bit allocation to minimize the MSE distortion. Then, the optimal power allocation strategy is computed to minimize the channel errors. In \cite{shirazi}, optimal power allocation for vector parameter estimation is investigated with the aim of maximizing the average Bayesian Fisher information between the random parameter vector and the observation vector. In \cite{channelUncertainty}, optimal power allocation schemes for LMMSE estimation are derived by taking channel estimation errors into account. In \cite{tcom_11,tcom_12,tcom_13,tcom_14,tcom_15,tcom_16,tcom_17,tcom_18,tcom_19,tcom_20}, the optimal power allocation problem is considered for position estimation in wireless localization and radar systems. In \cite{tcom_13}, the transmit power allocation problem is formulated as a semidefinite program by using the squared position error bound as the objective function. In \cite{tcom_17}, the total transmit power is minimized by imposing a constraint on the CRLB for target localization in a distributed multiple-radar system. In addition, the dual problem of CRLB minimization for a predefined total power budget is considered.

It is noted that theoretical lower bounds for estimation error are commonly used in the literature to define optimality criteria for developing power adaptation strategies in estimation problems \cite{shirazi}, \cite{tcom_11,tcom_12,tcom_13,tcom_14,tcom_15,tcom_16,tcom_17,tcom_18,tcom_19,tcom_20,taes1,GeziciJammingTCOM,taes4,taes6}. In the absence of prior information, lower bounds generated from the Fisher information matrix (FIM) are usually adopted due to their practicality. As the most widely used bound, the CRLB is obtained as the inverse of the FIM and specifies a lower limit on the covariance matrix of any unbiased estimator with respect to the positive semidefinite cone. Various scalarizations of the FIM are employed in the literature \cite{tcom_17,GeziciJammingTCOM,taes6}.
In particular, the log-determinant of the FIM, the maximum (minimum) eigenvalue of the CRLB (FIM), the maximum  diagonal  entry  of  the  CRLB, the trace of the FIM, and  the minimum  diagonal  entry  of  the  FIM are utilized for quantifying estimation performance from various perspectives such as estimation robustness and probabilistic confinement of estimator error \cite[Section 9.2.1]{kalaba}, \cite{Emery_1998,boydbook,dogancay1,dogancay2,berkan,tzoreff}. In this manuscript, the power adaptation problem for vector parameter estimation is considered according to such Fisher information based optimality criteria and the corresponding optimal strategies are characterized.


Although there exist a multitude of studies on power allocation for various estimation problems in the literature, a general investigation of the optimal power allocation problem for vector parameter estimation according to various Fisher information based criteria is not available to the best of authors' knowledge. In particular, we consider a generic additive noise model, where the observation vector is a linear function of the parameter vector corrupted by additive noise with an arbitrary probability distribution.
Based on this model, we first present the FIM in terms of the system parameters, including the power allocation parameters. Then, we formulate optimal power allocation problems according to six different estimation performance criteria based on the FIM, and derive various closed-form solutions.
We also extend our results to cases in which nuisance parameters exist in the problem or certain types of nonlinear transformations are applied on the parameter vector. The main contributions and novelty of this manuscript can be summarized as follows:
\begin{itemize}
    \item According to various Fisher information based optimality criteria, we propose optimal power allocation problems for vector parameter estimation by considering a system model, where the parameter vector is processed by a linear transformation and corrupted by additive noise with a generic probability distribution.
    \item Based on optimization theoretic approaches, we provide various closed-form solutions for the proposed power allocation problems.
    \item We show that the proposed optimal power allocation strategies are also valid {for nonlinear system models under certain conditions and} in the presence of nuisance parameters.
\end{itemize}
In addition, we provide numerical examples to illustrate the performance of the proposed strategies and compare them with the equal power allocation strategy. It should be noted that providing closed-form solutions for optimal power allocation is important for real-time applications due to delay and computational complexity requirements.


The rest of the manuscript is organized as follows: The system model is presented in Section~\ref{sec:sysmodel} and optimal power allocation strategies are derived in Section~\ref{sec:approaches}. In Section~\ref{sec:Extend}, extensions
to nonlinear models and presence of nuisance parameters are considered. 
Numerical results are provided in Section~\ref{sec:results} followed by the concluding remarks in Section~\ref{sec:conclusion}.

\section{System Model}\label{sec:sysmodel}

Consider the following linear\footnote{Extensions to nonlinear models are presented in Section~\ref{sec:Extend}.} model relating a vector of unknown deterministic parameters $\boldsymbol{\theta}=[\theta_1,\ldots,\theta_k]^T\in\mathbb{R}^{k}$ with their measurements ${\bf{X}}\in\mathbb{R}^n$:
\begin{equation}\label{eq:sysmodel}
    {\bf{X}}={\bf{F}}^T{\bf{P}}{\boldsymbol{\theta}}+{\bf{N}}
\end{equation}
In \eqref{eq:sysmodel}, $\bf{F}$ is a $k\times n$ real matrix with full row rank ($k\leq n$) that is assumed to be known, ${\bf{N}}\in\mathbb{R}^n$ is the additive noise vector with a joint probability density function $f_{\bf{N}}(\cdot)$, which is independent of $\boldsymbol{\theta}$, and $\bf{P}$ is a $k\times k$ diagonal power allocation matrix (to be optimized) expressed as
\begin{equation}\label{eq:Pmatrix}
    \bf{P} = \begin{bmatrix}
                \sqrt{p_1} & & \bf{0} \\
                 & \ddots & \\
                \bf{0} & & \sqrt{p_k}
             \end{bmatrix}
\end{equation}
subject to the total power constraint
\begin{equation}
    \sum_{i=1}^{k}{p_i}\leq P_{\Sigma}
\end{equation}
where $p_i$ denotes the power allocated to the parameter $\theta_i$ and $P_{\Sigma}$ denotes the (available) total power. For the linear model given in \eqref{eq:sysmodel}, the FIM of the measurement vector $\bf{X}$ with respect to the parameter vector $\boldsymbol{\theta}$ is obtained as \cite[Lemma 5]{zamir}.
\begin{equation}\label{eq:fimx}
    {\bf{I}}({\bf{X}};\boldsymbol{\theta})={\bf{P}}{\bf{F}}{\bf{I}}({\bf{N}}){\bf{F}}^T{\bf{P}},
\end{equation}
where ${\bf{P}}={\bf{P}}^T$ is employed, and ${\bf{I}}({\bf{N}})$ is a special form of the FIM, namely the FIM of the random vector $\bf{N}$ with respect to a translation parameter $\boldsymbol{\phi}$ \cite[Equation 8]{zamir}, defined as
\begin{equation}\label{eq:In}
    {\bf{I}}({\bf{N}})={\bf{I}}(\boldsymbol{\phi}+{\bf{N}};\boldsymbol{\phi})
    =\int{\frac{1}{f_{\bf{N}}({\bf{n}})}\left(\frac{\partial f_{\bf{N}}({\bf{n}})}{\partial {\bf{n}}}\right)\left(\frac{\partial f_{\bf{N}}({\bf{n}})}{\partial {\bf{n}}}\right)^Td{\bf{n}}}
\end{equation}
It is noted that the FIM under translation is a function of only the probability density function (pdf) of the random vector $\bf{N}$, and consequently, ${\bf{I}}({\bf{X}};\boldsymbol{\theta})$ in \eqref{eq:fimx} does not depend on the parameter vector $\boldsymbol{\theta}$. {It is assumed that the noise pdf $f_{\bf{N}}(\cdot)$ satisfies certain regularity conditions so that the FIM in \eqref{eq:In} exists \cite{schervish}.}

In the following, we provide closed-form solutions for optimal power allocation problems by considering various estimation accuracy criteria based on the FIM in \eqref{eq:fimx}.


\section{Optimal Power Allocation for Vector Parameter Estimation}\label{sec:approaches}

\subsection{Average Mean Squared Error Criterion}\label{sec:AvgMSE}

The inverse of the FIM, known as the Cramer-Rao lower bound (CRLB)  provides a lower bound on the MSE of any unbiased estimator $\hat{\boldsymbol{\theta}}({\bf{X}})$ via the following matrix inequality \cite{poor}:
\begin{equation}
    {\rm{Cov}}(\hat{\boldsymbol{\theta}}({\bf{X}})) \geq {\bf{I}}^{-1}({\bf{X}};{\boldsymbol{\theta}})
\end{equation}
where ${\rm{Cov}}(\hat{\boldsymbol{\theta}}({\bf{X}}))= {\rm{E}}[(\hat{{\boldsymbol{\theta}}}({\bf{X}})-{\boldsymbol{\theta}})(\hat{{\boldsymbol{\theta}}}({\bf{X}})-{\boldsymbol{\theta}})^T]$ due to the unbiasedness and the expectation is taken with respect to the pdf of $\bf{X}$ given $\boldsymbol{\theta}$.
Consequently, the lower bound on the average MSE of the vector parameter can be stated as
\begin{equation}\label{eq:AoptCri}
    {\rm{E}}\big[\rVert\hat{{\boldsymbol{\theta}}}({\bf{X}})-{\boldsymbol{\theta}}\lVert^2\big] \geq {\rm{tr}}\{{\bf{I}}^{-1}({\bf{X}};{\boldsymbol{\theta}})\}
\end{equation}
Consideration of the lower bound in \eqref{eq:AoptCri} as a performance metric in optimal design problems is referred as the A-optimality criterion in the literature \cite{kalaba,Emery_1998,taes1}.

The optimal power allocation problem that minimizes the lower bound on the average MSE subject to a sum-power constraint can be formulated as
\begin{equation}\label{eq:avgmseopt}
    \begin{aligned}
        \min_{\{p_i\}_{i=1}^{k}}\quad&{\rm{tr}}\{{\bf{I}}^{-1}({\bf{X}};\boldsymbol{\theta})\} \\
        \textrm{s.t.}\quad &\sum_{i=1}^{k}{p_i} \leq P_{\Sigma}\\
        &p_i \geq 0,\quad i=1,2,\ldots,k
    \end{aligned}
\end{equation}
For the convenience of notation, two system dependent matrices can be defined as
\begin{equation}\label{eq:jdef}
\begin{aligned}
        {\bf{J}} &\triangleq {\bf{F}}{\bf{I}}({\bf{N}}){\bf{F}}^T\\
        {\bf{A}}&\triangleq{\bf{J}}^{-1}
\end{aligned}
\end{equation}
From \eqref{eq:fimx} and \eqref{eq:jdef}, the FIM with respect to the parameter vector $\boldsymbol{\theta}$ and the corresponding CRLB are expressed respectively as ${\bf{I}}({\bf{X}};{\boldsymbol{\theta}})={\bf{P}}{\bf{J}}{\bf{P}}$ and ${\bf{I}}^{-1}({\bf{X}};{\boldsymbol{\theta}})={\bf{P}}^{-1}{\bf{A}}{\bf{P}}^{-1}$. Then, the objective function in \eqref{eq:avgmseopt} can be written in terms of the power allocation coefficients and the diagonal entries of $\bf{A}$ as
\begin{equation}\label{eq:ObjFn1}
    \begin{aligned}
        {\rm{tr}}\{{\bf{I}}^{-1}({\bf{X}};\boldsymbol{\theta})\} &= \rm{tr}\{{\bf{P}}^{-1}{\bf{A}}{\bf{P}}^{-1}\}\\
        &=\rm{tr}\{({\bf{P}}^{-1})^2\bf{A}\}\\
        &=\sum_{i=1}^{k}{\frac{a_{ii}}{p_{i}}}\\
    \end{aligned}
\end{equation}
where $a_{ii}$ denotes the $i$th diagonal entry of $\bf{A}$. It is noted that the FIM is assumed to be positive-definite for the existence of the CRLB. Therefore, ${\bf{A}}$ and ${\bf{J}}$ in \eqref{eq:jdef} have positive diagonal entries.

As the objective function is convex (see \eqref{eq:ObjFn1}) and the constraints are linear, the problem in \eqref{eq:avgmseopt} is a convex optimization problem. In addition, Slater's condition holds \cite{boydbook}. Therefore, Karush-Kuhn-Tucker (KKT) conditions are necessary and sufficient for optimality. From \eqref{eq:ObjFn1}, the Lagrangian for \eqref{eq:avgmseopt} is expressed as
\begin{align}
\label{eq:Lagr}
        \mathcal{L}\big(\{p_i\}_{i=1}^{k},\{\upsilon_i\}_{i=1}^{k+1}\big)=\sum_{i=1}^{k}{\frac{a_{ii}}{p_i}}+
        \upsilon_1\left(\sum_{i=1}^{k}{p_i}-P_\Sigma\right)
        -\sum_{i=1}^{k}\upsilon_{i+1}p_{i}
\end{align}
where $\upsilon_1,\ldots,\upsilon_{k+1}$ are the dual variables.
Then, KKT conditions for optimality are obtained as follows \cite{boydbook}:
\begin{itemize}
    \item \textit{Primal Feasibility:} The optimal power allocation strategy $\{p_i^*\}_{i=1}^{k}$ must satisfy the constraints $\sum_{i=1}^{k}{p_i^*}\leq P_\Sigma$ and $p_i^*\geq 0,\:\forall i\in\{1,\ldots,k\}$.
    \item \textit{Dual Feasibility:} The dual variables must be non-negative, i.e., $\upsilon_i^*\geq 0$ for $i=1,\ldots,k+1$.
    \item \textit{Stationarity:} The derivatives of the Lagrangian in \eqref{eq:Lagr} with respect to $p_i$ must be equal to zero at $p_i=p_i^*$ for $i=1,\ldots,k$. That is,
\begin{equation}\label{eq:avgmseinter}
        \frac{\partial \mathcal{L}}{\partial p_i}\bigg|_{p_i=p_i^*} = -\frac{a_{ii}}{(p_i^*)^2}+{\upsilon}_1^*-{\upsilon}_{i+1}^*=0
\end{equation}
for $i=1,\ldots,k$.
    \item \textit{Complementary Slackness:} At the optimal solution, the following conditions hold:
    \begin{align}\label{eq:CompSlack1}
        {\upsilon}_1^*\left(\sum_{i=1}^{k}{p_i^*}-P_\Sigma\right)&=0\\\label{eq:CompSlack2}
        {\upsilon}_{i+1}^*p_{i}^*&=0,~i=1,\ldots,k
    \end{align}
\end{itemize}
For the condition in \eqref{eq:CompSlack1}, the case of ${\upsilon}_1^*=0$ is not possible since the derivative in \eqref{eq:avgmseinter} could be set to zero only for $p_i^*\rightarrow\infty$ in that case (for some positive $a_{ii}$), which would violate the primal feasibility condition. Therefore, \eqref{eq:CompSlack1} implies that ${\upsilon}_1^*>0$ and
\begin{equation}\label{eq:fullpower}
\sum_{i=1}^{k}{p_i^*}=P_\Sigma\,
\end{equation}
That is, full-power utilization is required for optimality\footnote{This fact can also be seen by noting that the objective function in \eqref{eq:ObjFn1} is a decreasing function of $p_i$'s.}. Then, two cases are investigated depending on the values of $a_{ii}$'s. Let $\mathcal{A}_z$ and $\mathcal{A}_p$ denote the sets of indices $i$ for which $a_{ii}$'s are zero and positive, respectively. That is, $\mathcal{A}_z=\{i\in\{1,\ldots,k\}\,|\,a_{ii}=0\}$ and $\mathcal{A}_p=\{i\in\{1,\ldots,k\}\,|\,a_{ii}>0\}$.

\underline{Case~1:} Consider an index $i$ such that $i\in\mathcal{A}_z$. Suppose that $p_i^*>0$. Then, \eqref{eq:CompSlack2} implies that $\upsilon_{i+1}^*=0$ and the expression in \eqref{eq:avgmseinter} becomes equal to $\upsilon_1^*$. However, $\upsilon_1^*>0$ as discussed before, which leads to a contradiction (i.e., the stationary condition could not be satisfied). Hence, it is concluded that
\begin{equation}\label{eq:avgmseinter0}
p_i^*=0\,~\textrm{for}\,~i\in\mathcal{A}_z.
\end{equation}

\underline{Case~2:} Consider an index $i$ such that $i\in\mathcal{A}_p$. In that case, it can be concluded from \eqref{eq:avgmseinter}--\eqref{eq:CompSlack2} that $p_i^*>0$ and $\upsilon_{i+1}^*=0$ for $i\in\mathcal{A}_p$. Then, \eqref{eq:avgmseinter} leads to
\begin{equation}\label{eq:avgmseinter2}
p_i^*=\sqrt{\frac{a_{ii}}{{\upsilon}_1^*}}\,~\textrm{for}\,~i\in\mathcal{A}_p
\end{equation}

From \eqref{eq:fullpower}, a relation for $\upsilon_1^*$ can be obtained as
\begin{equation}\label{eq:lambda1casea}
        \sum_{j=1}^{k}{p_j^*}=\sum_{j\in{\mathcal{A}}_p}{p_j^*}=\sum_{j\in{\mathcal{A}}_p}{\sqrt{\frac{a_{jj}}{{\upsilon}_1^*}}}=P_\Sigma,
\end{equation}
which yields
\begin{equation}\label{eq:lambda1casea2}
        \frac{1}{\sqrt{{\upsilon}_1^*}}=\frac{P_\Sigma}{\sum_{j\in{\mathcal{A}}_p}{\sqrt{a_{jj}}}}
        =\frac{P_\Sigma}{\sum_{j=1}^{k}{\sqrt{a_{jj}}}}
\end{equation}

Based on \eqref{eq:avgmseinter0}, \eqref{eq:avgmseinter2} and \eqref{eq:lambda1casea2}, the optimal power allocation strategy to minimize the average MSE in \eqref{eq:avgmseopt} is specified as follows:
\begin{equation}\label{eq:avgmse}
    p_i^*=\frac{P_\Sigma\sqrt{a_{ii}}}{\sum_{j=1}^{k}{\sqrt{a_{jj}}}},~~i=1,\ldots,k
\end{equation}
Hence, a closed-form solution to the problem in \eqref{eq:avgmseopt} is obtained.



\subsection{Shannon Information Criterion}

An alternative criterion for estimation accuracy is to maximize the log-determinant of the FIM, i.e.,
\begin{equation}\label{eq:logdet}
    {\bf{I}}_S({\bf{X}};\boldsymbol{\theta})=\log\det {\bf{I}}({\bf{X}};\boldsymbol{\theta})
\end{equation}
which is associated with the volume of the confidence ellipsoid containing the estimation error \cite[Section 7.5.2]{boydbook}. This criterion is known as the Shannon information criterion, and also the D-optimal design in the literature \cite{Emery_1998,dogancay1,dogancay2,taes1}.


The optimal power allocation problem with the objective of maximizing the Shannon information under the sum-power constraint can be expressed as
\begin{equation}\label{eq:shannonInfoOpt}
    \begin{aligned}
       \max_{\{p_i\}_{i=1}^{k}}\quad&\log\det {\bf{I}}({\bf{X}};\boldsymbol{\theta})\\
       \rm{s.t.}\quad&\sum_{i=1}^{k}{p_i} \leq P_\Sigma\\
        &p_i \geq 0,\quad i=1,\ldots,k
    \end{aligned}
\end{equation}
The problem in \eqref{eq:shannonInfoOpt} involves the maximization of a concave function, and the feasible region has an interior point; hence, Slater's condition is satisfied. Consequently, KKT conditions are necessary and sufficient for optimality. In order to find the optimal solution, the Shannon information can be expressed in terms of the known matrices. From \eqref{eq:fimx} and  \eqref{eq:jdef}, the Shannon information can be written as
\begin{equation}\label{eq:shannonInfoMatrix}
\begin{aligned}
    \log\det{{\bf{I}}({\bf{X}};\boldsymbol{\theta})} &= \log\det{\bf{P}\bf{J}\bf{P}}\\
    &= 2\log\det{\bf{P}}+\log\det{\bf{J}}\\
    &= \sum_{i=1}^{k}{\log{p_i}}+\log\det{\bf{J}}
\end{aligned}
\end{equation}
As seen in \eqref{eq:shannonInfoMatrix}, the Shannon information separates into a power allocation dependent component and a system dependent component, the latter being constant for a fixed $\bf{F}$ and ${\bf{I}}({\bf{N}})$. Therefore, it suffices to consider $\sum_{i=1}^{k}{\log{p_i}}$ in order to maximize the Shannon information. Therefore, the problem in \eqref{eq:shannonInfoOpt} reduces to
\begin{equation}\label{shannoninfosol}
    \begin{aligned}
        \max_{\{p_i\}_{i=1}^{k}}\quad&\sum_{i=1}^{k}{\log{p_i}}\\
        \rm{s.t.}\quad&\sum_{i=1}^{k}{p_i} \leq P_\Sigma\\
        &p_i \geq 0,\quad i=1,\ldots,k
    \end{aligned}
\end{equation}
which is a convex optimization problem. This problem is equivalent to maximizing the product of nonnegative numbers whose sum is constant. Hence, its solution can be obtained as
\begin{equation}\label{shannonOptimal}
    p_i^*=\frac{P_\Sigma}{k},\quad i=1,\hdots,k
\end{equation}
That is, the optimal power allocation strategy according to the Shannon information criterion is to allocate equal power to all the parameters at the sum-power limit. Corresponding to the optimal strategy, the maximum Shannon information is achieved as
\begin{equation}
    {\bf{I}}_S^*({\bf{X}};\boldsymbol{\theta})=k\log{\left(\frac{P_\Sigma}{k}\right)}+\log\det{\bf{J}}\,.
\end{equation}

\subsection{Worst-Case Error Variance Criterion}\label{sec:WCEVC}

The worst-case error variance criterion is a measure of robustness rather than average estimation accuracy and is associated with the maximum eigenvalue of the CRLB \cite{boydbook}, \cite{berkan}. In order to reduce the worst-case error variance, the maximum eigenvalue of the CRLB can be minimized. {Optimality according to this criterion is also known as E-optimality, where the minimum diameter of the FIM is maximized \cite{Emery_1998,tzoreff,taes1}. When variances vary significantly, the confidence ellipsoid can have very different diameters along different dimensions; hence, the log-volume minimization approach in the D-optimal design can be misleading \cite{tzoreff,Emery_1998}.}

The optimal power allocation strategy that minimizes the maximum eigenvalue of the CRLB corresponds to maximizing the minimum eigenvalue of the FIM. Hence, the following problem can be considered:
\begin{equation}\label{eq:wcevaropt}
    \begin{aligned}
        \max_{\{p_i\}_{i=1}^{k}}\quad&\lambda_{\min}\{{\bf{I}}({\bf{X}};\boldsymbol{\theta})\}\\
        \textrm{s.t.}\quad&\sum_{i=1}^{k}{p_i} \leq P_\Sigma\\
        &p_i \geq 0,\quad i=1,\ldots,k
    \end{aligned}
\end{equation}
From \eqref{eq:fimx} and \eqref{eq:jdef}, ${\bf{I}}({\bf{X}};\boldsymbol{\theta})$ in \eqref{eq:wcevaropt} can be expressed as ${\bf{I}}({\bf{X}};\boldsymbol{\theta})={\bf{P}}{\bf{J}}{\bf{P}}$, where ${\bf{J}}$ is positive semi-definite and ${\bf{P}}$ is diagonal (see \eqref{eq:Pmatrix}). It can be shown that the eigenvalues of ${\bf{P}}{\bf{J}}{\bf{P}}$ are the same as those of ${\bf{P}}^2{\bf{J}}$ based on their characteristic equations. However in general, there is not a closed-form relationship between the eigenvalues of ${\bf{P}}^2$ and ${\bf{J}}$ and the eigenvalues of their product.
Therefore, it is challenging to obtain a closed-form solution to \eqref{eq:wcevaropt}. One way to solve \eqref{eq:wcevaropt} is to apply global optimization tools such as particle swarm optimization (PSO) or the multistart algorithm \cite{PSO1}. This approach is adopted in Section~\ref{sec:results} to obtain the solution of \eqref{eq:wcevaropt}.

To perform further investigations on the problem in \eqref{eq:wcevaropt},
we can derive a bound on the objective function in \eqref{eq:wcevaropt}. To that aim, the following lemma can be utilized to provide bounds for the eigenvalues of the FIM, ${\bf{I}}({\bf{X}};\boldsymbol{\theta})$.

\vspace{0.1cm}

\begin{lemma}
Let $\G,\T \in \mathbb{R}^{n\times n}$, $\G^T=\G$, $\T^T=\T$, $rank(\G)=rank(\T)=n$ and $\R=\G\T$. Let $\{\nu_i\}_{i=1}^{n}$, $\{\mu_i\}_{i=1}^{n}$ and $\{\gamma_i\}_{i=1}^{n}$ denote, respectively, the eigenvalues of $\G$, $\T$ and $\R$ increasing in the absolute value. Then,
\begin{equation}\label{eq:EigenIneq}
    |\nu_1||\mu_1| \leq |\gamma_1| \leq \cdots \leq |\gamma_n| \leq |\nu_n||\mu_n|
\end{equation}
\end{lemma}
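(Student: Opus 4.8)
The plan is to reduce the statement entirely to singular values and the spectral radius, exploiting the symmetry of $\G$ and $\T$ together with invertibility of $\R$. First I would record the elementary facts. Since $\G^T=\G$, the singular values of $\G$ are precisely $\{|\nu_i|\}_{i=1}^{n}$, so that $\sigma_{\max}(\G)=|\nu_n|$ and $\sigma_{\min}(\G)=|\nu_1|$; likewise the singular values of $\T$ are $\{|\mu_i|\}_{i=1}^{n}$ with $\sigma_{\max}(\T)=|\mu_n|$ and $\sigma_{\min}(\T)=|\mu_1|$. Because $rank(\G)=rank(\T)=n$, both matrices are invertible, hence so is $\R=\G\T$, and all $\gamma_i$ are nonzero. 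The inner chain $|\gamma_1|\le\cdots\le|\gamma_n|$ is then immediate from the ordering convention, so the real content is the two outer inequalities $|\nu_1||\mu_1|\le|\gamma_1|$ and $|\gamma_n|\le|\nu_n||\mu_n|$.

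The key intermediate step is the sandwich: for \emph{any} invertible $\R\in\mathbb{R}^{n\times n}$ with eigenvalues $\{\gamma_i\}$, one has $\sigma_{\min}(\R)\le|\gamma_i|\le\sigma_{\max}(\R)$ for every $i$. The upper bound is $\max_i|\gamma_i|=\rho(\R)\le\|\R\|_2=\sigma_{\max}(\R)$; the lower bound follows by applying the same fact to $\R^{-1}$, whose eigenvalues are $\{1/\gamma_i\}$ and whose spectral norm equals $1/\sigma_{\min}(\R)$, giving $\max_i 1/|\gamma_i|\le 1/\sigma_{\min}(\R)$, i.e. $\min_i|\gamma_i|\ge\sigma_{\min}(\R)$. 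In particular $|\gamma_n|=\rho(\R)\le\sigma_{\max}(\R)$ and $|\gamma_1|=\min_i|\gamma_i|\ge\sigma_{\min}(\R)$.

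It then remains to bound the extreme singular values of the product. Submultiplicativity of the spectral norm gives $\sigma_{\max}(\R)=\|\G\T\|_2\le\|\G\|_2\,\|\T\|_2=|\nu_n||\mu_n|$. For the smallest singular value I would write $\sigma_{\min}(\R)=1/\|\R^{-1}\|_2=1/\|\T^{-1}\G^{-1}\|_2\ge 1/(\|\T^{-1}\|_2\,\|\G^{-1}\|_2)=\sigma_{\min}(\T)\,\sigma_{\min}(\G)=|\mu_1||\nu_1|$. Chaining with the sandwich of the previous paragraph yields $|\nu_1||\mu_1|\le\sigma_{\min}(\R)\le|\gamma_1|$ and $|\gamma_n|\le\sigma_{\max}(\R)\le|\nu_n||\mu_n|$, which together with the trivial ordering is exactly \eqref{eq:EigenIneq}.

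The point requiring care — what I would call the only real obstacle — is that $\R=\G\T$ is generally \emph{not} symmetric (the product of two symmetric matrices need not be symmetric), so its eigenvalues may be complex and need not coincide with its singular values; the argument is therefore arranged so that the $\gamma_i$ enter only through their moduli via the spectral radius, and are never identified with singular values of $\R$. In the power-allocation application one actually has $\G=\mathbf{P}^2$ and $\T=\mathbf{J}$ both positive definite, so the $\gamma_i$ are real and positive, but the proof above does not need this. Everything else — singular values of symmetric matrices, submultiplicativity of $\|\cdot\|_2$, and $\sigma_{\min}(\R)=1/\sigma_{\max}(\R^{-1})$ — is standard, and since no strict inequalities are claimed, no case analysis (e.g. according to whether $\G$ and $\T$ commute) is needed.
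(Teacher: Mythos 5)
Your proof is correct, and its skeleton is the same as the paper's: submultiplicativity of the spectral norm gives the upper bound, and the identical argument applied to $\R^{-1}=\T^{-1}\G^{-1}$ gives the lower bound. The substantive difference is how $|\gamma_n|$ is related to $\lVert\R\rVert_2$. The paper asserts that $\R$, being a product of two symmetric matrices, is itself symmetric, and concludes $\lVert\R\rVert=|\gamma_n|$; that assertion is false in general, since $(\G\T)^T=\T\G$ equals $\G\T$ only when the factors commute (for instance $\G=\mathrm{diag}(1,-1)$ and $\T$ the $2\times2$ exchange matrix give a skew-symmetric product with eigenvalues $\pm i$). You correctly flag this as the delicate point and replace the equality by the one-sided bounds $\max_i|\gamma_i|=\rho(\R)\le\lVert\R\rVert_2=\sigma_{\max}(\R)$ and, via $\R^{-1}$, $\min_i|\gamma_i|\ge\sigma_{\min}(\R)$, both of which hold for arbitrary (possibly non-normal) matrices with possibly complex spectrum. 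So your write-up does not merely reproduce the paper's argument; it repairs it, and the lemma survives precisely because the spectral-radius inequality does not need symmetry of $\R$. (In the intended application $\G=\mathbf{P}^2$ and $\T=\mathbf{J}$ are positive semidefinite, so there the eigenvalues of $\R$ are real and nonnegative, but your proof rightly does not rely on that.)
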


\vspace{-0.7cm}

\begin{proof}
For the Euclidean matrix norm, it is known that $\lVert \G\T \lVert \leq \lVert \G\lVert\lVert \T \lVert$. In addition, the norm of a symmetric matrix is equal to its spectral radius, i.e., $\lVert \G\lVert = \max_{j}|\nu_j|$ and $\lVert \T\lVert = \max_{j}|\mu_j|$. Therefore, $\lVert \R\lVert = \lVert \G\T\lVert \leq \lVert \G\lVert\lVert \T\lVert = |\nu_n||\mu_n|$. Since $\R$ is the product of two symmetric matrices, it is also symmetric. Therefore, $\lVert \R\lVert = |\gamma_n|$. Hence, the upper bound on the absolute value of the maximum eigenvalue of $\R$ is obtained as $|\gamma_n| \leq |\nu_n||\mu_n|$. Since  $rank(\G)=rank(\T)=n$, the lower bound can be derived via inversion; that is, $\R^{-1}=\T^{-1}\G^{-1}$. Therefore, $\lVert \R^{-1}\lVert=\lVert \T^{-1}\G^{-1}\lVert \leq \frac{1}{|\nu_1||\mu_1|}$. Through the same reasoning, $\lVert \R^{-1}\lVert = \frac{1}{|\gamma_1|}$. Hence, the relation of $\frac{1}{|\gamma_1|} \leq \frac{1}{|\nu_1||\mu_1|}$ is obtained, which yields the lower bound in \eqref{eq:EigenIneq}.
\end{proof}

Lemma~1 can be used to derive a lower bound on the objective function in \eqref{eq:wcevaropt} as follows:
\begin{equation}\label{eq:boundObj}
    \lambda_{\min}\{{\bf{I}}({\bf{X}};\boldsymbol{\theta})\}=\lambda_{\min}\{{\bf{P}}^2{\bf{J}}\}
    \geq \lambda_{\min}\{{\bf{J}}\}\min_{i\in\{1,\ldots,k\}}{p_i}
\end{equation}
where $\lambda_{\min}\{{\bf{J}}\}$ denotes the minimum eigenvalue of ${\bf{J}}$. In \eqref{eq:boundObj}, the absolute value operators in \eqref{eq:EigenIneq} are not used since all the eigenvalues are non-negative and the eigenvalues of ${\bf{P}}^2$ are taken as $\{p_1,\ldots,p_k\}$ based on \eqref{eq:Pmatrix}.

Instead of maximizing the minimum eigenvalue of ${\bf{I}}({\bf{X}};\boldsymbol{\theta})$ in \eqref{eq:wcevaropt}, consider the maximization of the lower bound on it.
As noted from \eqref{eq:boundObj}, the lower bound on the minimum eigenvalue of ${\bf{I}}({\bf{X}};\boldsymbol{\theta})$ depends on the minimum power allocated to an individual parameter. Therefore, instead of \eqref{eq:wcevaropt}, we get the following convex optimization problem:
\begin{equation}\label{eq:wcevarsol}
\begin{aligned}
    \max_{\{p_i\}_{i=1}^{k}}\quad &\min_{j\in\{1,\ldots,k\}}{p_j}\\
    \rm{s.t.}\quad &\sum_{i=1}^{k}{p_i} \leq P_\Sigma\\
    &p_i \geq 0, \quad  i = 1,\ldots,k
\end{aligned}
\end{equation}
The problem in \eqref{eq:wcevarsol} is a minimax problem over a scaled $k$-simplex. Therefore, its solution is an equalizer rule \cite{poor}, leading to
$p_1^*=p_2^*=\cdots=p_k^*$ with $\sum_{i=1}^{k}{p_i^*}=P_\Sigma$. Hence, the solution of \eqref{eq:wcevarsol} is given by $p_i^*={P_\Sigma}/{k}$ for $i=1,\ldots,k$; that is, the optimal power allocation strategy to maximize (minimize) the lower (upper) \emph{bound} on the minimum (maximum) eigenvalue of the FIM (CRLB) is the equal power allocation strategy. Consequently, the lower bound on the minimum eigenvalue of ${\bf{I}}({\bf{X}};\boldsymbol{\theta})$ becomes $\lambda_{\min}\{{\bf{J}}\}P_\Sigma/k$.



\vspace{0.1cm}

\noindent\textit{\textbf{Remark~1:} The preceding analysis indicates that the equal power allocation strategy solves the problem of maximizing a \textbf{lower bound} on the objective function in \eqref{eq:wcevaropt}. Hence, it does not necessarily yield the optimal power allocation strategy. (The numerical example in Section~\ref{sec:WCEcriSim} illustrates this fact.)}


\subsection{Worst-Case Coordinate Error Variance Criterion}

As an alternative measure of robustness, one can consider the worst-case coordinate error variance, which is bounded by the largest diagonal entry of the CRLB; i.e., $\max_{j \in \{1,\ldots,k\}}{[{\bf{I}}^{-1}({\bf{X};\boldsymbol{\theta}})]_{j,j}}$. {This criterion is referred as G-optimality \cite{kalaba,Emery_1998}, and it has the effect of reducing the worst-case error variance as well.}

From \eqref{eq:ObjFn1},
the $j$th diagonal entry of the CRLB can be expressed as
\begin{equation}
    {[{\bf{I}}^{-1}({\bf{X}};\boldsymbol{\theta})]}_{j,j}={\frac{a_{jj}}{p_j}}
\end{equation}
Therefore, the problem of minimizing the maximum diagonal entry of the CRLB can be formulated as
\begin{equation}\label{eq:optProbD}
\begin{aligned}
        \min_{\{p_i\}_{i=1}^{k}}\:&\max_{j\in\{1,\ldots,k\}}{{\frac{a_{jj}}{p_j}}}\\
        \rm{s.t.}\quad&\sum_{i=1}^{k}{p_i} \leq P_\Sigma\\
        &p_i \geq 0,\quad i=1,\ldots,k
\end{aligned}
\end{equation}
The problem in \eqref{eq:optProbD} is a convex optimization problem, it can be shown that the solution of  \eqref{eq:optProbD} satisfies $\sum_{i=1}^{k}{p_i^*}=P_\Sigma$ and 
\begin{equation}\label{eq:equalizeD}
    \frac{a_{ii}}{p_i^{*}}=\alpha,\quad \forall i \in \{1,\ldots,k\}
\end{equation}
where $\alpha$ is a constant (i.e., an equalizer solution \cite{minimax}).
Then, parameter $\alpha$ in \eqref{eq:equalizeD} obtained from
    \begin{equation}
        \sum_{i=1}^{k}{p_i^*}=\frac{1}{\alpha}\sum_{i=1}^{k}{a_{ii}}=P_\Sigma
    \end{equation}
which yields
\begin{equation}
        \alpha=\frac{{\rm{tr}}\{{\bf{A}}\}}{P_\Sigma}\,\cdot
\end{equation}
Hence, the optimal power allocation strategy is given by
\begin{equation}\label{eq:wccevar_opt}
    p_i^*=\frac{P_\Sigma\,a_{ii}}{{\rm{tr}}\{{\bf{A}}\}},\quad i=1,\ldots,k
\end{equation}
When the optimal power allocation strategy is employed, the diagonal entries of the CRLB are the same, and the worst-case coordinate error variance becomes $\alpha={{\rm{tr}}\{{\bf{A}}\}}/{P_\Sigma}$.

\subsection{Average Fisher Information Criterion}\label{sec:AvgFI}

When the aim is to estimate a vector of parameters, the average Fisher information indicates the overall usefulness of the observation vector to estimate the parameter vector. The informativeness of the observation vector to estimate the $i$th parameter corresponds to the $i$th diagonal entry of the FIM. Therefore, the average Fisher information is related to the trace of the FIM. Accordingly, the optimal power allocation problem for maximizing the trace of the FIM is formulated as follows:
\begin{equation}\label{eq:AvgFIprob}
    \begin{aligned}
        \max_{\{p_i\}_{i=1}^{k}}\quad&{{\rm{tr}}\{{\bf{I}}({\bf{X}};\boldsymbol{\theta})\}}\\
        \rm{s.t.}\quad&\sum_{i=1}^{k}{p_i} \leq P_\Sigma\\
        &p_i \geq 0,\quad i=1,\ldots,k
    \end{aligned}
\end{equation}
From \eqref{eq:fimx} and \eqref{eq:jdef}, the objective function in \eqref{eq:AvgFIprob} can be rewritten in terms of the known matrices as
\begin{equation}\label{eq:ObjAvgFI}
    {{\rm{tr}}\{{\bf{I}}({\bf{X}};\boldsymbol{\theta})\}}
    ={{\rm{tr}}\{{\bf{P}}{\bf{J}}{\bf{P}}\}}
    ={\sum_{i=1}^{k}{p_i\,j_{ii}}}
\end{equation}
where $j_{ii}\triangleq[{\bf{J}}]_{i,i}$. Based on \eqref{eq:ObjAvgFI}, the problem in \eqref{eq:AvgFIprob} can be converted to a linear program (LP) by defining
\begin{align}\label{eq:Defp}
{\bf{p}}&\triangleq {\bf{diag}}({\bf{P}}{\bf{P}}^T)=[p_1\ldots p_k]^T\\\label{eq:Defj}
{\bf{j}}&\triangleq{\bf{diag}}({\bf{J}})=[j_{11}\ldots j_{kk}]^T,
\end{align}
and expressing \eqref{eq:AvgFIprob} as
\begin{equation}\label{optavgFI}
    \begin{aligned}
    \max_{\bf{p}}\quad&{{\bf{j}}^T{\bf{p}}}\\
    \rm{s.t.}\quad&{\bf{1}}^T{\bf{p}}\leq P_\Sigma\\
    &{\bf{p}}\geq 0
    \end{aligned}
\end{equation}
The solution of \eqref{optavgFI} is provided in the following proposition.

\begin{claim}
Let $i^*$ denote the index of the maximum element of ${\bf{j}}$ in \eqref{eq:Defj}; i.e., $i^*=\arg\max_{l \in \{1,\ldots,k\}} {j_{ll}}$.
Then, the optimal power allocation strategy that maximizes the average Fisher information under the sum-power constraint is given by ${\bf{p}}^*= [p_1^*\cdots p_k^*]^T$, where
\begin{equation}\label{eq:AvgFIoptSol}
    p_{i}^{*} = \begin{cases}
        P_\Sigma, &i = i^* \\
        0, &\rm{otherwise}
    \end{cases}
\end{equation}
for $i=1,\ldots,k$. (In case of multiple maxima, the indices of any non-empty subset can be selected.)
\end{claim}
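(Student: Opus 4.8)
The plan is to read \eqref{optavgFI} for what it is — a linear program over the scaled simplex $\{{\bf{p}}\geq 0:{\bf{1}}^T{\bf{p}}\leq P_\Sigma\}$ — and to show that the vertex placing the entire budget on the coordinate of largest weight is optimal. First I would record two elementary observations. Since ${\bf{J}}={\bf{F}}{\bf{I}}({\bf{N}}){\bf{F}}^T$ is positive semidefinite (indeed positive definite, as ${\bf{A}}={\bf{J}}^{-1}$ is assumed to exist), every $j_{ii}\geq 0$ and $j_{i^*i^*}=\max_{l}j_{ll}>0$. Moreover, the objective ${\bf{j}}^T{\bf{p}}=\sum_i j_{ii}p_i$ is nondecreasing in each $p_i$, so the sum-power constraint is active at any optimum; hence it suffices to maximize over ${\bf{1}}^T{\bf{p}}=P_\Sigma$.

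The core of the proof is a one-line estimate: for every feasible ${\bf{p}}$,
\begin{equation}
{\bf{j}}^T{\bf{p}}=\sum_{i=1}^{k}j_{ii}\,p_i\leq j_{i^*i^*}\sum_{i=1}^{k}p_i\leq j_{i^*i^*}\,P_\Sigma ,
\end{equation}
where the first inequality uses $p_i\geq 0$ together with $j_{ii}\leq j_{i^*i^*}$, and the second uses the budget. The strategy in \eqref{eq:AvgFIoptSol} is feasible and attains ${\bf{j}}^T{\bf{p}}^*=j_{i^*i^*}P_\Sigma$, so it is optimal, and the maximum average Fisher information equals $j_{i^*i^*}P_\Sigma$. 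The same chain of inequalities shows that any optimizer must assign zero power to every coordinate with $j_{ii}<j_{i^*i^*}$ and may split $P_\Sigma$ arbitrarily among the maximizing indices, which is exactly the parenthetical ``non-empty subset'' clause in the statement.

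For consistency with the KKT-based style used elsewhere in Section~\ref{sec:approaches}, I would alternatively present the proof through the KKT system of \eqref{optavgFI}: with $\mathcal{L}=-{\bf{j}}^T{\bf{p}}+\upsilon_1({\bf{1}}^T{\bf{p}}-P_\Sigma)-\sum_{i=1}^{k}\upsilon_{i+1}p_i$, stationarity reads $-j_{ii}+\upsilon_1-\upsilon_{i+1}=0$; taking $\upsilon_1^*=j_{i^*i^*}$ and $\upsilon_{i+1}^*=j_{i^*i^*}-j_{ii}\geq 0$ alongside the primal point \eqref{eq:AvgFIoptSol} satisfies primal feasibility, dual feasibility, stationarity, and complementary slackness (the budget is tight and $\upsilon_{i+1}^*p_i^*=0$ since $p_i^*=0$ whenever $\upsilon_{i+1}^*>0$). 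Because \eqref{optavgFI} is a linear program satisfying Slater's condition, these conditions are sufficient. I do not anticipate a real obstacle here: this is a textbook LP whose optimum is a vertex of the feasible polytope, and the only points meriting a word of care — non-uniqueness under multiple maxima of ${\bf{j}}$, and the degenerate case ${\bf{j}}={\bf{0}}$ (excluded here since ${\bf{J}}$ is invertible) — are already flagged in the statement.
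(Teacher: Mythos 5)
Your proof is correct, and it takes a slightly different route from the paper's. The paper argues by perturbation: starting from an arbitrary feasible ${\bf{p}}$, it constructs a modification vector that shifts power from the other coordinates onto $i^*$, verifies feasibility is preserved, and shows the objective change $-\sum_{i\neq i^*}(j_{i^*i^*}-j_{ii})\delta_i$ is nonnegative, with the maximum improvement at $\delta_i=-p_i$. You instead prove optimality by the direct bound ${\bf{j}}^T{\bf{p}}\leq j_{i^*i^*}\sum_i p_i\leq j_{i^*i^*}P_\Sigma$ and exhibit a feasible point attaining it. Your version is shorter, and it yields as an immediate corollary the full characterization of the optimal set (zero power on every coordinate with $j_{ii}<j_{i^*i^*}$, arbitrary split among the maximizers), which the paper only asserts parenthetically; the paper's exchange argument, on the other hand, makes explicit \emph{how} an arbitrary suboptimal allocation can be improved, which is a more constructive reading of the same fact. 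Your supplementary KKT certificate ($\upsilon_1^*=j_{i^*i^*}$, $\upsilon_{i+1}^*=j_{i^*i^*}-j_{ii}$) is also valid and matches the style of the other subsections, though neither it nor the paper's proof is strictly needed once the one-line bound is in place. One small point of care: the positivity $j_{i^*i^*}>0$ that you invoke does follow from the standing assumption that ${\bf{A}}={\bf{J}}^{-1}$ exists (a positive semidefinite invertible matrix is positive definite, hence has positive diagonal), but note the proposition itself does not actually need it — if all $j_{ii}$ were zero the claim would hold vacuously with any feasible point optimal.
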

Proposition~1 states that in order to maximize the average FI, the whole power must be allocated to the parameter corresponding to the maximum diagonal entry of $\bf{J}$. When the optimal power adaptation strategy is used, the average Fisher information achieves a maximum value of ${P_\Sigma}\max_i{j_{ii}}$.



\subsection{Worst-Case Coordinate Fisher Information Criterion}\label{sec:WCCFI}

Depending on the system properties, the observation vector can be more informative about some parameters and less informative about the others. Consequently, the estimation performance of individual parameters can vary to some extent. Such variations can be undesirable as certain performance requirements should be satisfied for estimation of all parameters. To alleviate this effect, one approach is to maximize the minimum Fisher information contained the observation vector w.r.t. individual parameters. Such an optimization increases the robustness of estimation against accuracy variations.

The minimum Fisher information contained in the observation vector w.r.t. individual parameters is called the worst-case coordinate FI, which corresponds to the minimum diagonal entry of the FIM, that is, $\min_{i \in \{1,...,k\}}{{[{\bf{I}}({\bf{X}};\boldsymbol{\theta})]}_{i,i}}$.
Based on this objective function, the following maximization problem is defined under the sum-power constraint:
\begin{equation}\label{eq:wcfiopt}
    \begin{aligned}
        \max_{\{p_i\}_{i=1}^{k}}\quad&\min_{i \in \{1,...,k\}}{{[{\bf{I}}({\bf{X}};\boldsymbol{\theta})]}_{i,i}}\\
        \rm{s.t.}\quad&\sum_{i=1}^{k}{p_i} \leq P_\Sigma\\
        &p_i\geq 0,\quad i=1,\ldots,k
    \end{aligned}
\end{equation}
Based on \eqref{eq:ObjAvgFI}, the objective function in \eqref{eq:wcfiopt} can be written as
\begin{equation}
    \min_i{{[{\bf{I}}({\bf{X}};\boldsymbol{\theta})]}_{i,i}}=\min_i\,{p_i\,j_{ii}}
\end{equation}
Then, \eqref{eq:wcfiopt} is observed to have a very similar form to the problem in \eqref{eq:wcevarsol}. Hence, the same steps can be followed and it can be shown that the solution of \eqref{eq:wcfiopt} satisfies $\sum_{i=1}^{k}{p_i^*}=P_\Sigma$ and
\begin{equation}
    p_{i}^{*}\,j_{ii}=\tilde{\alpha},\quad\forall i \in \{1,\ldots,k\}
\end{equation}
where $\tilde{\alpha}$ is a constant that is specified by
    \begin{equation}\label{eq:totPowF}
        \sum_{i=1}^{k}{p_i^*}=\sum_{i=1}^{k}\frac{\tilde{\alpha}}{j_{ii}}=P_\Sigma
    \end{equation}
From \eqref{eq:totPowF}, $\tilde{\alpha}$ is obtained as
\begin{equation}
\tilde{\alpha}=\frac{P_\Sigma}{\sum_{i=1}^{k}{\frac{1}{j_{ii}}}}\,.
\end{equation}
Therefore, the optimal power allocation strategy becomes
\begin{equation}\label{eq:FinalCriSol}
    p_{i}^{*}=\frac{P_\Sigma}{j_{ii}\sum_{l=1}^{k}{\frac{1}{j_{ll}}}}\,,\quad i=1,\ldots,k
\end{equation}
When the optimal power allocation strategy is used, the worst case Fisher information achieves a maximum value of ${P_\Sigma}\big/{\sum_{l=1}^k{\frac{1}{j_{ll}}}}$.
It is noted that the optimal power allocation strategy in \eqref{eq:FinalCriSol} equalizes the Fisher information contained in $\bf{X}$ w.r.t. the $i$th element of $\boldsymbol{\theta}$ for all $i\in\{1,\ldots,k\}$; that is, $p_i^*j_{ii}={P_\Sigma}\big/{\sum_{l=1}^k{\frac{1}{j_{ll}}}}$ for all $i\in\{1,\ldots,k\}$.

\section{Extensions}\label{sec:Extend}

\subsection{Presence of Nuisance Parameters}\label{sec:Nuis}

In some vector parameter estimation scenarios, only a subset of parameters can be of interest for estimation purposes. 
Let only $r$ out of the $k$ parameters be relevant and the remaining $k-r$ parameters be nuisance parameters. We assume that the nuisance parameters must be transmitted with unit power and power adaptation is not available for them. Without loss of generality, we can arrange the vector of parameters to be transmitted as
\begin{equation}
    {\boldsymbol{\theta}}=\begin{bmatrix}
    {\boldsymbol{\theta}}_\gamma\\
    {\boldsymbol{\theta}}_\sigma
    \end{bmatrix}
\end{equation}
where ${\boldsymbol{\theta}}_\gamma\in{\mathbb{R}}^{r}$ denotes the vector of relevant parameters, and ${\boldsymbol{\theta}}_\sigma\in{\mathbb{R}}^{k-r}$ represents the vector of nuisance parameters. Then, the power allocation matrix becomes
\begin{equation}\label{eq:NewP}
    {\bf{P}}=\begin{bmatrix}
    {\bf{P}}_\gamma&{\bf{0}}\\
    {\bf{0}}&{\bf{I}}_{k-r}
    \end{bmatrix}
\end{equation}
where ${\bf{I}}_{k-r}$ denotes the $(k-r)\times(k-r)$ identity matrix. Under the same system model, matrix $\bf{J}$ defined in \eqref{eq:jdef} can be expressed as
\begin{equation}\label{eq:NewJ}
    {\bf{J}}=\begin{bmatrix}
        {\bf{J}}_\gamma&{\bf{B}}\\
        {\bf{B}}^T&{\bf{J}}_\sigma
    \end{bmatrix}
\end{equation}
where ${\bf{J}}_\gamma\in{\mathbb{R}}^{r\times r}$ and ${\bf{J}}_\sigma\in{\mathbb{R}}^{(k-r)\times (k-r)}$ are the components of $\bf{J}$ corresponding to the parameters of interest and the nuisance parameters, respectively, and matrix ${\bf{B}}\in\mathbb{R}^{r\times (k-r)}$ and its transpose are the cross-terms. Similarly, matrix $\bf{A}$ in \eqref{eq:jdef}, i.e., $\bf{A}\triangleq\bf{J}^{-1}$, can also be arranged as
\begin{equation}\label{eq:NewA}
    {\bf{A}}=\begin{bmatrix}
    {\bf{A}}_\gamma&{\bf{C}}\\
    {\bf{C}}^T&{\bf{A}}_\sigma
    \end{bmatrix}
\end{equation}
where ${\bf{C}}\in\mathbb{R}^{r\times (k-r)}$ and its transpose are the cross-terms. Based on \eqref{eq:NewP}, \eqref{eq:NewJ}, and \eqref{eq:NewA}, the FIM and the CRLB can be expressed as
\begin{align}
    {\bf{I}}({\bf{X}};{\boldsymbol{\theta}})&=
    {\bf{P}}{\bf{J}}{\bf{P}}=
    \begin{bmatrix}
    \bf{P}_\gamma\bf{J}_\gamma\bf{P}_\gamma&\bf{P}_\gamma\bf{B}\\
    {\bf{B}}^T\bf{P}_\gamma&\bf{J}_\sigma
    \end{bmatrix}\\
    {\bf{I}}^{-1}({\bf{X}};{\boldsymbol{\theta}})&=
    {\bf{P}}^{-1}{\bf{A}}{\bf{P}}^{-1}=
    \begin{bmatrix}
    \bf{P}^{-1}_\gamma\bf{A}_\gamma\bf{P}^{-1}_\gamma&\bf{P}^{-1}_\gamma\bf{C}\\
    {\bf{C}}^T\bf{P}^{-1}_\gamma&\bf{A}_\sigma
    \end{bmatrix}
\end{align}
The related terms of the FIM and the CRLB are the ones involving only the parameters of interest. In this setting, only the first $r$ rows and the first $r$ columns are taken into account; that is,
\begin{align}
    {\bf{I}}_\gamma({\bf{X}};{\boldsymbol{\theta}})&={\bf{P}}_\gamma\bf{J}_\gamma{\bf{P}}_\gamma\\
    {\bf{I}}^{-1}_\gamma({\bf{X}};{\boldsymbol{\theta}})&={\bf{P}}^{-1}_\gamma\bf{A}_\gamma{\bf{P}}^{-1}_\gamma
\end{align}
where ${\bf{A}}_\gamma=({\bf{J}}_\gamma-{\bf{B}}{\bf{J}}_\sigma^{-1}{\bf{B}}^{T})^{-1}$ \cite{shenwin}. As seen from above, the power allocation strategies developed in Section~\ref{sec:approaches} (which are developed in the absence of nuisance parameters) can also be used in this case.


\subsection{Extension to Nonlinear Model}\label{sec:NonLin}

In some practical applications, the linear system model in \eqref{eq:sysmodel} may not be valid, and the parameter vector, after power adaptation, can be processed by a nonlinear transformation ${\bf{f}}(\cdot)$ as follows:
\begin{equation}\label{eq:nonLin}
    {\bf{X}}={\bf{f}}({\bf{P}}\boldsymbol{\theta})+{\bf{N}}
\end{equation}
In this case, the FIM w.r.t. parameter $\boldsymbol{\theta}$ can be expressed in the same form as \eqref{eq:fimx} after replacing $\bf{F}$ with the Jacobian of the vector valued function ${\bf{f}}(\cdot)$ \cite[Lemma~4]{zamir}. More explicitly, let ${\boldsymbol{\phi}}\triangleq\bf{P}\boldsymbol{\theta}$ in \eqref{eq:nonLin}, and the Jacobian of ${\bf{f}}(\boldsymbol{\phi})$ w.r.t. its argument $\boldsymbol{\phi}$ is given as
\begin{equation}\label{eq:newF}
    \bf{F} \triangleq \begin{bmatrix}
    \frac{\partial f_1}{\partial\phi_1}&\frac{\partial f_2}{\partial\phi_1}&\hdots&\frac{\partial f_n}{\partial\phi_1}\\
    \frac{\partial f_1}{\partial\phi_2}&\frac{\partial f_2}{\partial\phi_2}&\hdots&\frac{\partial f_n}{\partial\phi_2}\\
    \vdots&\vdots&\ddots&\vdots\\
    \frac{\partial f_1}{\partial\phi_k}&\frac{\partial f_2}{\partial\phi_k}&\hdots&\frac{\partial f_n}{\partial\phi_k}
    \end{bmatrix}
\end{equation}
If ${\bf{f}}(\cdot)$ is continuously differentiable w.r.t. $\boldsymbol{\phi}$ and $\bf{F}$ in \eqref{eq:newF} does not depend on $p_i$'s for $i=1,\dots,k$, $\bf{F}$ in \eqref{eq:newF} can be substituted into \eqref{eq:fimx}, and the developed techniques can be employed without further modification for power adaptation in the presence of a nonlinear system model, as well. If $\bf{F}$ depends on $p_i$'s, \eqref{eq:newF} is still valid; however, the objective functions should be modified accordingly, leading to possibly nonconvex optimization problems. In that case, numerical methods can be employed. On the other hand, if ${\bf{f}}(\cdot)$ is not continuously differentiable w.r.t. $\boldsymbol{\phi}$, further analysis is required and new techniques should be developed.


\section{Numerical Results}\label{sec:results}

In this section, we provide numerical examples for the optimal power allocation strategies in Section~\ref{sec:approaches}. 
In all cases, the equal power allocation strategy is also implemented for comparison purposes. The noise is modeled as a zero-mean Gaussian random vector with independent components; that is, ${\bf{N}}\sim \mathcal{N}(\boldsymbol{0},\Sigma)$, where $\Sigma={\rm{Diag}}(\sigma_1^2,\hdots,\sigma_n^2)$. For this noise model, the FIM of $\bf{N}$ in \eqref{eq:In} is obtained as
\begin{equation}
    {\bf{I}}({\bf{N}})=\Sigma^{-1}={\rm{Diag}}\left(1/\sigma_1^2,\hdots,1/\sigma_n^2\right)
\end{equation}
In the simulations, 
$\sigma_i^2$'s are set to $\sigma_i^2=10^{-7+3(i-1)/(n-1)}$ for $i=1,\ldots,n$.
The dimension of the parameter vector, $k$, is varied between $2$ and $30$, and the dimension of the observation vector, $n$, is taken to be equal to the number of parameters, i.e., $k=n$. Also, for matrix $\bf{F}$ in \eqref{eq:sysmodel}, two different scenarios are considered. In the first scenario, ${\bf{F}}={\bf{F}}_1$, where ${\bf{F}}_1$ is  the $k\times k$ identity matrix ($k=n$), that is, ${\bf{F}}_1 = {\bf{I}}_{k\times k}$. In this scenario, we can observe the effects of power adaptation on the estimation performance when the main source of error is additive noise.
In the second scenario, ${\bf{F}}={\bf{F}}_2$, which is specified as
\begin{equation}
        {\bf{F}}_2 = {\bf{I}}_{k\times k} + \kappa{\bf{V}}^T
\end{equation}
with
\begin{equation}
    \begin{aligned}
        \kappa &= \frac{\lVert{\bf{I}}_{k\times k}\rVert_{\rm{F}}}{\lVert{\bf{V}}\rVert_{\rm{F}}}\\
        {\bf{V}} &= \begin{bmatrix}
        1 & 1 & 1 & \hdots & 1\\
        1 & 1+\epsilon & (1+\epsilon)^2 & \hdots & (1+\epsilon)^{k-1}\\
        1 & 1+2\epsilon & (1+2\epsilon)^2 & \hdots & (1+2\epsilon)^{k-1}\\
        \vdots &  \vdots & \vdots &  & \vdots\\
        1 & 1.5 & 1.5^2 & \hdots & 1.5^{k-1}
        \end{bmatrix}\\
        \epsilon &= \frac{0.5}{k-1}
    \end{aligned}
\end{equation}
That is, ${\bf{F}}_2$ is the sum of the $k\times k$ identity matrix and the transpose of a normalized $k\times k$ Vandermonde matrix, where
the normalization factor $\kappa$ makes sure that the Frobenius norms of the added matrices are equal.
In this scenario, the entries of the system matrix $\bf{F}$ differ from each other significantly, which implies that the system affects the individual parameters  differently. 



In the following, the optimal power allocation strategies are obtained according to the Fisher information based criteria in Section~\ref{sec:approaches} for the considered simulation setup, and the performance metrics are plotted against the dimension of the parameter vector, $k$, under a unit sum-power constraint, that is, $P_{\Sigma}=0$ dB.

\subsection{Results for Average MSE Criterion}

In this case, the problem in \eqref{eq:avgmseopt} is considered, and the CRLBs achieved by the optimal power allocation strategy in \eqref{eq:avgmse} and by the equal power allocation strategy (i.e., $p_i^*=P_{\Sigma}/k$, $i=1,\ldots,k$) are plotted versus $k$ in Fig.~\ref{fig:avgmse}. It is noted that as the dimension of the parameter vector increases, the CRLB on the average MSE increases for both optimal and equal power allocation strategies except for the slight initial decrease in the optimal strategy for ${\bf{F}}={\bf{F}}_2$. It is also observed that the optimal power allocation strategy consistently outperforms the equal power allocation strategy for both system matrices. As an example, for $\bf{F} = {\bf{F}}_2$, the CRLB is around $10^{-4}$ when $k = n = 7$ for the equal power allocation strategy, and the same level of CRLB is attained when $k = n = 14$ for the optimal power allocation strategy. Hence, significant improvements can be achieved by the optimal power allocation strategy.

\begin{figure}
    \centering
    \includegraphics[width=0.75\textwidth]{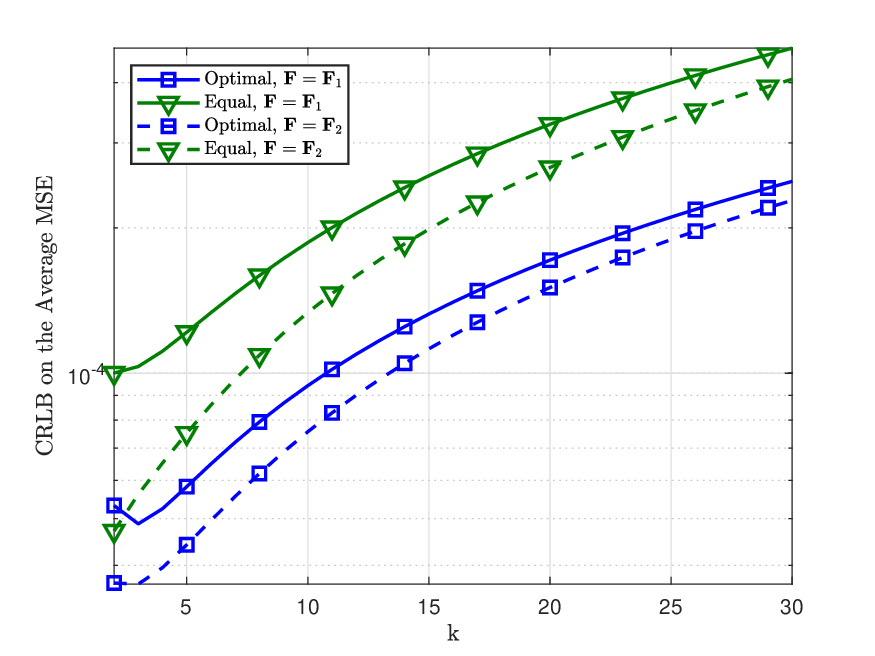}
    \caption{CRLB on the average MSE versus $k$ for the equal and optimal power allocation strategies.}
    \label{fig:avgmse}
\end{figure}


\subsection{Results for Shannon Information Criterion}

For this criterion, the problem in \eqref{eq:shannonInfoOpt} (equivalently, \eqref{shannoninfosol}) is considered, which leads to the solution in \eqref{shannonOptimal}. That is, the optimal and equal power allocation strategies yield the same solution in this case. The Shannon information achieved by the optimal (equal) power allocation strategy is plotted versus $k$ in Fig.~\ref{fig:shannoninfo}. It is observed that the Shannon information increases as the dimension of the parameter vector, $k$, increases. The increase in Shannon information is linear for both system matrices, and the achieved Shannon information scores are nearly the same for ${\bf{F}} = {\bf{F}}_1$ and ${\bf{F}} = {\bf{F}}_2$.




\begin{figure}
    \centering
    \includegraphics[width=0.75\textwidth]{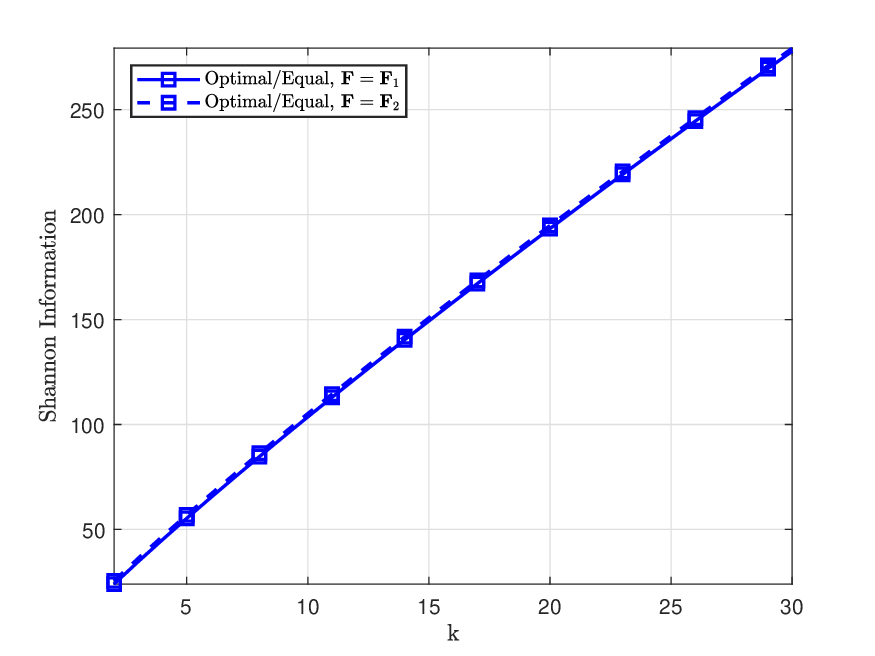}
    \caption{Shannon information versus $k$ for the optimal (equal) power allocation strategy.}
    \label{fig:shannoninfo}
\end{figure}

\subsection{Results for Worst-Case Error Variance Criterion}\label{sec:WCEcriSim}

In this case, the problem in \eqref{eq:wcevaropt} and the alternative problem in \eqref{eq:wcevarsol} are solved. The solution of \eqref{eq:wcevaropt} is obtained via the multistart global optimization algorithm in MATLAB. On the other hand, the equal power allocation strategy is the solution of \eqref{eq:wcevarsol}, as shown in Section~\ref{sec:WCEVC}. In Fig.~\ref{fig:worstvar}, the maximum eigenvalues of the CRLBs achieved by the optimal and equal power allocation strategies are plotted versus $k$. It can be seen in Fig.~\ref{fig:worstvar} that the optimal power allocation strategy can significantly outperform the equal power allocation strategy, and the difference between the two power allocation strategies increases as the number of parameters increases. One implication of this result is that power adaptation can get more effective when there exist more parameters to estimate. In addition, it is noted that maximizing the lower bound on the eigenvalues of the CRLB is not sufficient to obtain the optimal power allocation strategy, as stated in Remark~1.

\begin{figure}
    \centering
    \includegraphics[width=0.75\textwidth]{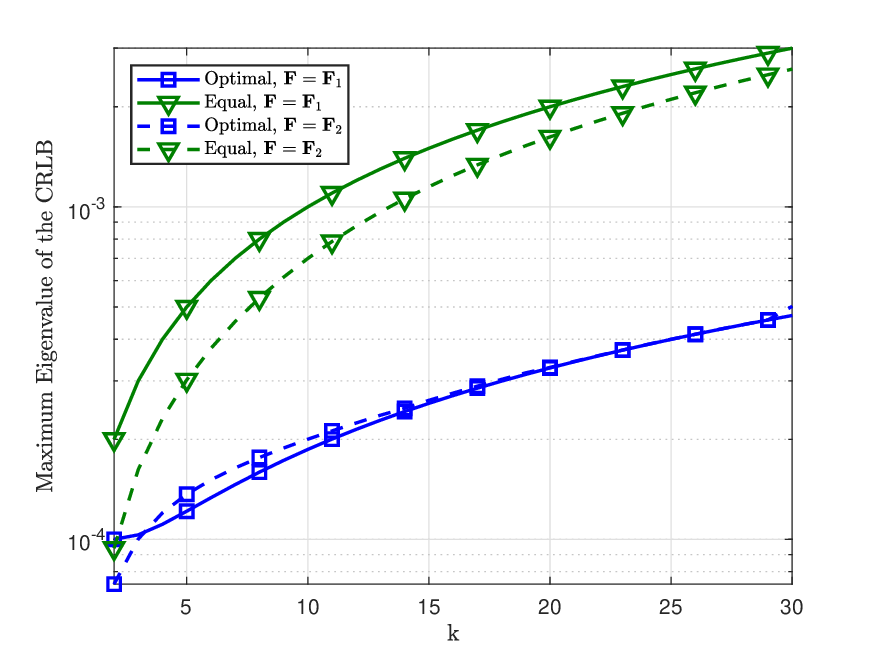}
    \caption{Maximum eigenvalue of the CRLB (inverse FIM) versus $k$ for the optimal and equal power allocation strategies.}
    \label{fig:worstvar}
\end{figure}

\subsection{Results for Worst-Case Coordinate Error Variance Criterion}

For this criterion, the problem in \eqref{eq:optProbD} is considered, which leads to the optimal solution in \eqref{eq:wccevar_opt}. The largest diagonal entry of the CRLB is plotted versus $k$ for both the optimal solution and the equal power allocation strategy in Fig.~\ref{fig:worstcoorvar}. It is noted that the trend is similar to that in Fig.~\ref{fig:worstvar}. Namely, the benefits of optimal power adaptation are observed for the worst-case coordinate error variance criterion, as well. 

\begin{figure}
    \centering
    \includegraphics[width=0.75\textwidth]{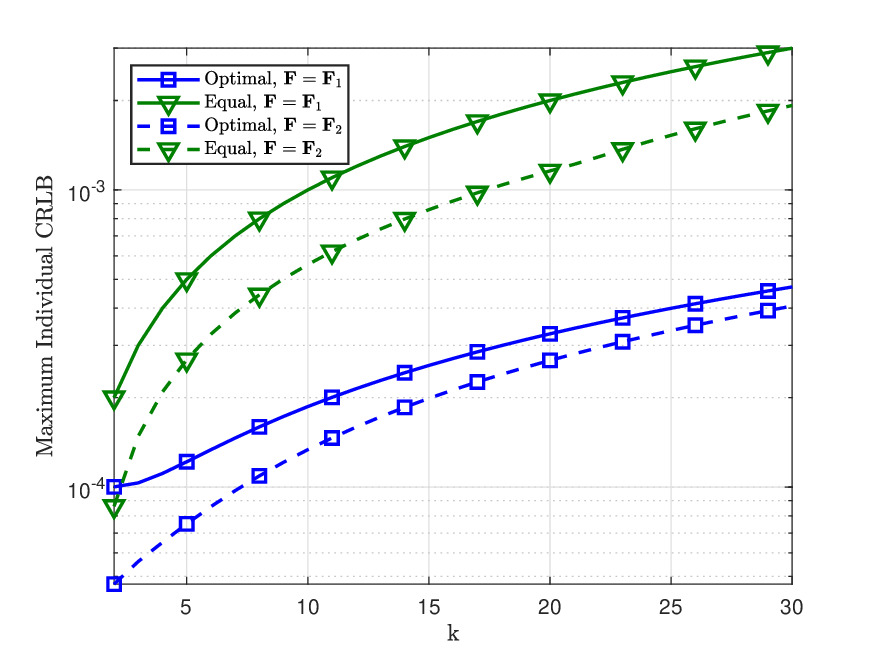}
    \caption{The largest diagonal entry of the CRLB (worst-case coordinate CRLB) versus $k$ for the optimal and equal power allocation strategies.}
    \label{fig:worstcoorvar}
\end{figure}

\subsection{Results for Average Fisher Information Criterion}

In this case, we focus on the problem in \eqref{eq:AvgFIprob}, the solution of which is provided by \eqref{eq:AvgFIoptSol} in Proposition~1. The impact of the dimension of the parameter vector, $k$, on the average Fisher information is shown in Fig.~\ref{fig:averageFI} for both the optimal solution in \eqref{eq:AvgFIoptSol} and the equal power allocation strategy. It is observed that the average Fisher information rapidly decreases with $k$ when $k\leq 10$ for both the optimal and equal power allocation strategies.
While the optimal power allocation strategy is superior to the equal power allocation for all values of $k$, significant enhancements are observed for large values of $k$.

\begin{figure}
    \centering
    \includegraphics[width=0.75\textwidth]{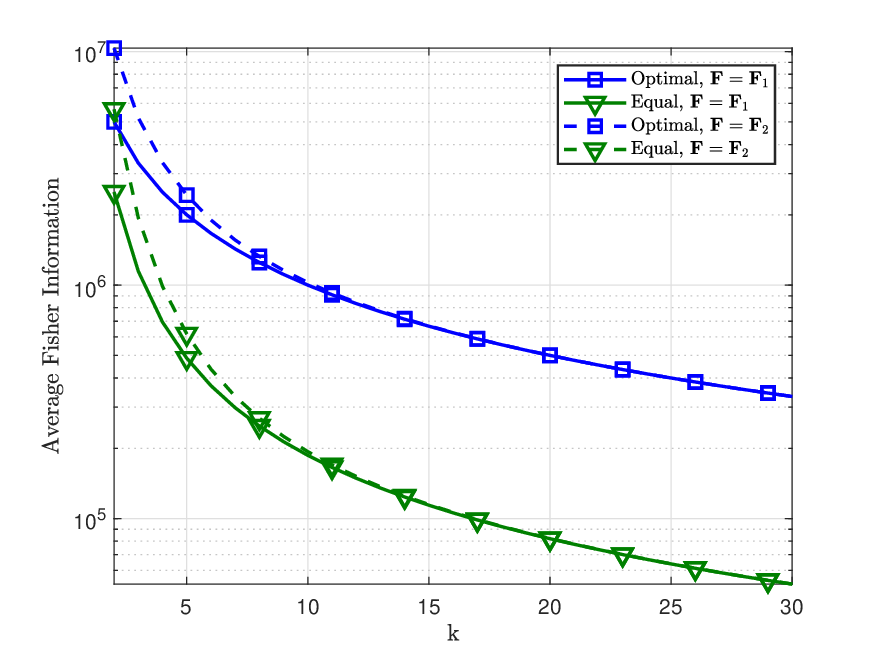}
    \caption{Average Fisher information versus $k$ for the optimal and equal power allocation strategies.}
    \label{fig:averageFI}
\end{figure}

\subsection{Results for Worst Case Coordinate Fisher Information}

In this scenario, the minimum diagonal entry of the FIM is maximized as in \eqref{eq:wcfiopt}, leading to the optimal power allocation strategy in \eqref{eq:FinalCriSol}. The minimum diagonal entry of the FIM is plotted versus $k$ for both the optimal and equal power allocation strategies in Fig.~\ref{fig:worstCoordFI}. When ${\bf{F}}={\bf{F}}_2$, the worst-case coordinate Fisher information rapidly decreases for small $k$, while the trend is more steady when ${\bf{F}}={\bf{F}}_1$.
The decrease in worst-case Fisher information slows down for large values of $k$.
Overall, the impact of power adaptation can be observed more clearly when $k$ is large. 

\begin{figure}
    \centering
    \includegraphics[width=0.75\textwidth]{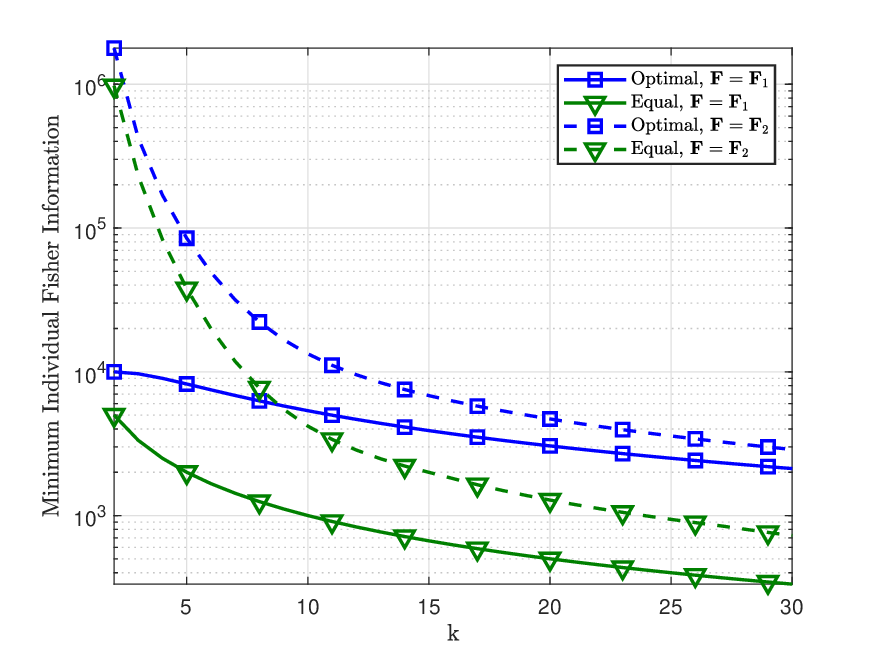}
    \caption{The minimum diagonal entry of the FIM (worst-case coordinate Fisher information) versus $k$ for the optimal and equal power allocation strategies.}
    \label{fig:worstCoordFI}
\end{figure}

${}$

It is noted from the simulation results that when the dimensions of the parameter and observation vectors are large, power adaptation becomes more critical and the optimal power allocation strategies can provide more significant improvements over the equal power allocation strategy. In addition, the trends show that power adaptation can mitigate the adverse effects of increases in the dimension of the parameter vector when the observation vector has the same dimension as the parameter vector.

\section{Conclusion}\label{sec:conclusion}

The optimal power allocation problem has been investigated for vector parameter estimation in the absence of prior information according to various Fisher information based optimality criteria. After deriving the FIM for a linear observation model, six different optimal power allocation problems have been formulated. Then, some closed-form solutions have been provided based on optimization theoretic approaches. It has been shown that the proposed power allocation strategies are also valid for nonlinear system models under certain conditions and in the presence of nuisance parameters. Numerical examples have shown that the use of the optimal power allocation strategies can provide significant improvements in estimation performance over the equal power allocation strategy.

\ifCLASSOPTIONcaptionsoff
  \newpage
\fi



\bibliographystyle{IEEEtran}
\bibliography{references}

\begin{thebibliography}{10}
\providecommand{\url}[1]{#1}
\csname url@samestyle\endcsname
\providecommand{\newblock}{\relax}
\providecommand{\bibinfo}[2]{#2}
\providecommand{\BIBentrySTDinterwordspacing}{\spaceskip=0pt\relax}
\providecommand{\BIBentryALTinterwordstretchfactor}{4}
\providecommand{\BIBentryALTinterwordspacing}{\spaceskip=\fontdimen2\font plus
\BIBentryALTinterwordstretchfactor\fontdimen3\font minus
  \fontdimen4\font\relax}
\providecommand{\BIBforeignlanguage}[2]{{%
\expandafter\ifx\csname l@#1\endcsname\relax
\typeout{** WARNING: IEEEtran.bst: No hyphenation pattern has been}%
\typeout{** loaded for the language `#1'. Using the pattern for}%
\typeout{** the default language instead.}%
\else
\language=\csname l@#1\endcsname
\fi
#2}}
\providecommand{\BIBdecl}{\relax}
\BIBdecl

\bibitem{kayestimation}
S.~M. Kay, \emph{Fundamentals of Statistical Signal Processing, Volume I:
  Estimation Theory}.\hskip 1em plus 0.5em minus 0.4em\relax Prentice Hall,
  1993.

\bibitem{poor}
H.~V. Poor, \emph{An Introduction to Signal Detection and Estimation (2nd
  Ed.)}.\hskip 1em plus 0.5em minus 0.4em\relax Berlin, Heidelberg:
  Springer-Verlag, 1994.

\bibitem{Nehorai2013}
A.~Nehorai, G.~Tang, and P.~M. Green, \emph{Bounds on Estimation}.\hskip 1em
  plus 0.5em minus 0.4em\relax London: Springer London, 2013, pp. 1--8.

\bibitem{resourceAllocDigit}
A.~{Sani} and A.~{Vosoughi}, ``Resource allocation optimization for distributed
  vector estimation with digital transmission,'' in \emph{48th Asilomar
  Conference on Signals, Systems and Computers}, 2014, pp. 1463--1467.

\bibitem{distEstEnConstr}
J.~{Li} and G.~{AlRegib}, ``Distributed estimation in energy-constrained
  wireless sensor networks,'' \emph{IEEE Transactions on Signal Processing},
  vol.~57, no.~10, pp. 3746--3758, 2009.

\bibitem{wcevar_cite}
N.~H. {Tran}, L.~{Jim\'{e}nez Rodr\'{\i}guez}, and T.~{Le-Ngoc}, ``Optimal
  power control and error performance for full-duplex dual-hop {AF} relaying
  under residual self-interference,'' \emph{IEEE Communications Letters},
  vol.~19, no.~2, pp. 291--294, 2015.

\bibitem{kalaba}
R.~{Kalaba} and K.~{Spingarn}, \emph{Control, Identification, and Input
  Optimization}, 1st~ed., ser. Mathematical Concepts and Methods in Science and
  Engineering.\hskip 1em plus 0.5em minus 0.4em\relax Springer US, 1982.

\bibitem{distVecPowBWConstr}
A.~{Sani} and A.~{Vosoughi}, ``Distributed vector estimation for power- and
  bandwidth-constrained wireless sensor networks,'' \emph{IEEE Transactions on
  Signal Processing}, vol.~64, no.~15, pp. 3879--3894, 2016.

\bibitem{distBlueFeedbackCSI}
M.~{Fanaei}, M.~C. {Valenti}, and N.~A. {Schmid}, ``Power allocation for
  distributed {BLUE} estimation with full and limited feedback of {CSI},'' in
  \emph{IEEE Military Communications Conference (MILCOM 2013)}, 2013, pp.
  418--423.

\bibitem{shirazi}
M.~{Shirazi} and A.~{Vosoughi}, ``On {Bayesian Fisher} information maximization
  for distributed vector estimation,'' \emph{IEEE Transactions on Signal and
  Information Processing over Networks}, vol.~5, no.~4, pp. 628--645, 2019.

\bibitem{channelUncertainty}
X.~Liu, Z.~Li, X.~Liu, and J.~Wang, ``Power allocation in vector estimation
  systems with the impact of wireless channel uncertainty,''
  \emph{International Journal of Communication Systems}, vol.~30, pp. 1--10,
  2017.

\bibitem{tcom_11}
Y.~{Shen} and M.~Z. {Win}, ``Energy efficient location-aware networks,'' in
  \emph{2008 IEEE International Conference on Communications}, 2008, pp.
  2995--3001.

\bibitem{tcom_12}
W.~W. {Li}, Y.~{Shen}, Y.~J. {Zhang}, and M.~Z. {Win}, ``Efficient anchor power
  allocation for location-aware networks,'' in \emph{2011 IEEE International
  Conference on Communications (ICC)}, 2011, pp. 1--6.

\bibitem{tcom_13}
------, ``Robust power allocation via semidefinite programming for wireless
  localization,'' in \emph{2012 IEEE International Conference on Communications
  (ICC)}, 2012, pp. 3595--3599.

\bibitem{tcom_14}
------, ``Robust power allocation for energy-efficient location-aware
  networks,'' \emph{IEEE/ACM Transactions on Networking}, vol.~21, no.~6, pp.
  1918--1930, 2013.

\bibitem{tcom_15}
T.~{Wang}, G.~{Leus}, and L.~{Huang}, ``Ranging energy optimization for robust
  sensor positioning based on semidefinite programming,'' \emph{IEEE
  Transactions on Signal Processing}, vol.~57, no.~12, pp. 4777--4787, 2009.

\bibitem{tcom_16}
T.~{Wang} and G.~{Leus}, ``Ranging energy optimization for robust sensor
  positioning with collaborative anchors,'' in \emph{2010 IEEE International
  Conference on Acoustics, Speech and Signal Processing}, 2010, pp. 2714--2717.

\bibitem{tcom_17}
H.~{Godrich}, A.~P. {Petropulu}, and H.~V. {Poor}, ``Power allocation
  strategies for target localization in distributed multiple-radar
  architectures,'' \emph{IEEE Transactions on Signal Processing}, vol.~59,
  no.~7, pp. 3226--3240, 2011.

\bibitem{tcom_18}
S.~{Bayram}, N.~D. {Vanli}, B.~{Dulek}, I.~{Sezer}, and S.~{Gezici}, ``Optimum
  power allocation for average power constrained jammers in the presence of
  non-{Gaussian} noise,'' \emph{IEEE Communications Letters}, vol.~16, no.~8,
  pp. 1153--1156, 2012.

\bibitem{tcom_19}
Y.~{Shen}, W.~{Dai}, and M.~Z. {Win}, ``Optimal power allocation for active and
  passive localization,'' in \emph{2012 IEEE Global Communications Conference
  (GLOBECOM)}, 2012, pp. 3713--3718.

\bibitem{tcom_20}
------, ``Power optimization for network localization,'' \emph{IEEE/ACM
  Transactions on Networking}, vol.~22, no.~4, pp. 1337--1350, 2014.

\bibitem{taes1}
S.~Xu and K.~Dogancay, ``Optimal sensor placement for 3-d angle-of-arrival
  target localization,'' \emph{IEEE Transactions on Aerospace and Electronic
  Systems}, vol.~53, no.~3, pp. 1196--1211, 2017.

\bibitem{GeziciJammingTCOM}
S.~{Gezici}, M.~R. {Gholami}, S.~{Bayram}, and M.~{Jansson}, ``Jamming of
  wireless localization systems,'' \emph{IEEE Transactions on Communications},
  vol.~64, no.~6, pp. 2660--2676, June 2016.

\bibitem{taes4}
M.~Fowler, M.~Chen, and S.~Binghamton, ``Fisher-information-based data
  compression for estimation using two sensors,'' \emph{IEEE Transactions on
  Aerospace and Electronic Systems}, vol.~41, no.~3, pp. 1131--1137, 2005.

\bibitem{taes6}
G.~Corsini, F.~Gini, M.~Greco, and L.~Verrazzani, ``{Cramer}-{Rao} bounds and
  estimation of the parameters of the {Gumbel} distribution,'' \emph{IEEE
  Transactions on Aerospace and Electronic Systems}, vol.~31, no.~3, pp.
  1202--1204, 1995.

\bibitem{Emery_1998}
A.~F. Emery and A.~V. Nenarokomov, ``Optimal experiment design,''
  \emph{Measurement Science and Technology}, vol.~9, no.~6, pp. 864--876, June
  1998.

\bibitem{boydbook}
S.~Boyd and L.~Vandenberghe, \emph{Convex Optimization}.\hskip 1em plus 0.5em
  minus 0.4em\relax USA: Cambridge University Press, 2004.

\bibitem{dogancay1}
K.~{Dogancay}, ``Online optimization of receiver trajectories for scan-based
  emitter localization,'' \emph{IEEE Transactions on Aerospace and Electronic
  Systems}, vol.~43, no.~3, pp. 1117--1125, 2007.

\bibitem{dogancay2}
------, ``{UAV} path planning for passive emitter localization,'' \emph{IEEE
  Transactions on Aerospace and Electronic Systems}, vol.~48, no.~2, pp.
  1150--1166, 2012.

\bibitem{berkan}
B.~Dulek and S.~Gezici, ``Cost minimization of measurement devices under
  estimation accuracy constraints in the presence of {Gaussian} noise,''
  \emph{Digital Signal Processing}, vol.~22, no.~5, pp. 828--840, 2012.

\bibitem{tzoreff}
E.~{Tzoreff} and A.~J. {Weiss}, ``Single sensor path design for best emitter
  localization via convex optimization,'' \emph{IEEE Transactions on Wireless
  Communications}, vol.~16, no.~2, pp. 939--951, 2017.

\bibitem{zamir}
R.~{Zamir}, ``A proof of the {Fisher} information inequality via a data
  processing argument,'' \emph{IEEE Transactions on Information Theory},
  vol.~44, no.~3, pp. 1246--1250, 1998.

\bibitem{schervish}
C.~Pereira, T.~Irony, and M.~Schervish, ``Theory of statistics.'' \emph{Journal
  of the American Statistical Association}, vol.~92, no. 440, p. 1651, 1997.

\bibitem{PSO1}
K.~E. Parsopoulos and M.~N. Vrahatis, \emph{Particle swarm optimization method
  for constrained optimization problems}.\hskip 1em plus 0.5em minus
  0.4em\relax IOS Press, 2002, pp. 214--220, in Intelligent
  Technologies--Theory and Applications: New Trends in Intelligent
  Technologies.

\bibitem{minimax}
I.~J{\'o}{\'{z}}wiak and J.~Legut, ``Minimax decision rules for identifying an
  unknown distribution of a random variable,'' in \emph{Proceedings of 39th
  International Conference on Information Systems Architecture and Technology
  (ISAT 2018)}, J.~{\'{S}}wi{a}tek, L.~Borzemski, and Z.~Wilimowska, Eds.,
  2019, pp. 308--317.

\bibitem{shenwin}
Y.~Shen and M.~Win, ``Fundamental limits of wideband localization - {Part I: A}
  general framework,'' \emph{IEEE Transactions on Information Theory}, vol.~56,
  no.~10, pp. 4956--4980, 2010.

\end{thebibliography}
%



%





\end{document}